\newtheorem{theorem}{\bf Theorem}
\newtheorem{corollary}{\bf Corollary}
\DeclareMathAlphabet{\mathssf}{OT1}{cmss}{m}{sl}
\newcommand{\m}[1]{\mathbf{#1}^m}
\newcommand{\lo}[1]{\log_2\left(#1\right)}
\newcommand{\corr}{{\sigma_{SU}}}
\newcommand{\expect}[1]{\mathbb{E}\left[#1\right]}
\newcommand\independent{\protect\mathpalette{\protect\independenT}{\perp}}
\def\independenT#1#2{\mathrel{\rlap{$#1#2$}\mkern2mu{#1#2}}}
\title{Information Embedding meets Distributed Control}
\author{Pulkit $\text{Grover}^\dagger$, Aaron B. $\text{Wagner}^\ddagger$ and Anant $\text{Sahai}^\dagger $\thanks{$\dagger${Wireless Foundations, Department of EECS, University of California at Berkeley. Email: $\{$pulkit, sahai$\}$\;@\;eecs.berkeley.edu}. $\ddagger$ School of Electrical and Computer Engineering, Cornell University. Email: wagner\;@\;ece.cornell.edu. An abridged version of this paper will be presented at the 2010 Information Theory Workshop (ITW), Cairo, Egypt.    }}
\begin{document}\maketitle
\begin{abstract}
We consider the problem of information embedding where the encoder modifies a white Gaussian host signal in a power-constrained manner to encode the message, and the decoder recovers both the embedded message and the \textit{modified} host signal. This extends the recent work of Sumszyk and Steinberg to the continuous-alphabet Gaussian setting. We show that a dirty-paper-coding based strategy achieves the optimal rate for perfect recovery of the modified host and the message. We also provide bounds for the extension wherein the modified host signal is recovered only to within a specified distortion. When specialized to the zero-rate case, our results provide the tightest known lower bounds on the asymptotic costs for the vector version of a famous open problem in distributed control --- the Witsenhausen counterexample. Using this bound, we characterize the asymptotically optimal costs for the vector Witsenhausen problem numerically to within a factor of $1.3$ for all problem parameters, improving on the earlier best known bound of $2$.
\end{abstract}

\section{Introduction}

The problem of interest in this paper (see Fig.~\ref{fig:blockdgm}) derives its motivation from an information-theoretic standpoint, as well as from a distributed-control perspective. Information-theoretically, the problem is an extension of an information embedding problem recently addressed by Sumszyk and Steinberg~\cite{ReversibleStegotext} --- the encoder ensures that the decoder recovers the \textit{modified} host signal $\m{X}$ perfectly, along with the message. Philosophically, the work in~\cite{ReversibleStegotext} is directed towards understanding how a communication problem changes when an additional requirement, that of the encoder being able to produce a copy of the reconstruction at the decoder, is imposed on the system (in source coding context, the issue was explored by Steinberg in~\cite{SteinbergISIT2008}).  The problem is also closely connected to other information theory problems~\cite{CostaDirtyPaper,KimStateAmplification,KotagiriLaneman,MerhavMasking}. We refer the interested reader to~\cite{WitsenhausenJournal}, where these connections are discussed in detail. 

\begin{figure}[htb]
\begin{center}
\includegraphics[scale=0.5]{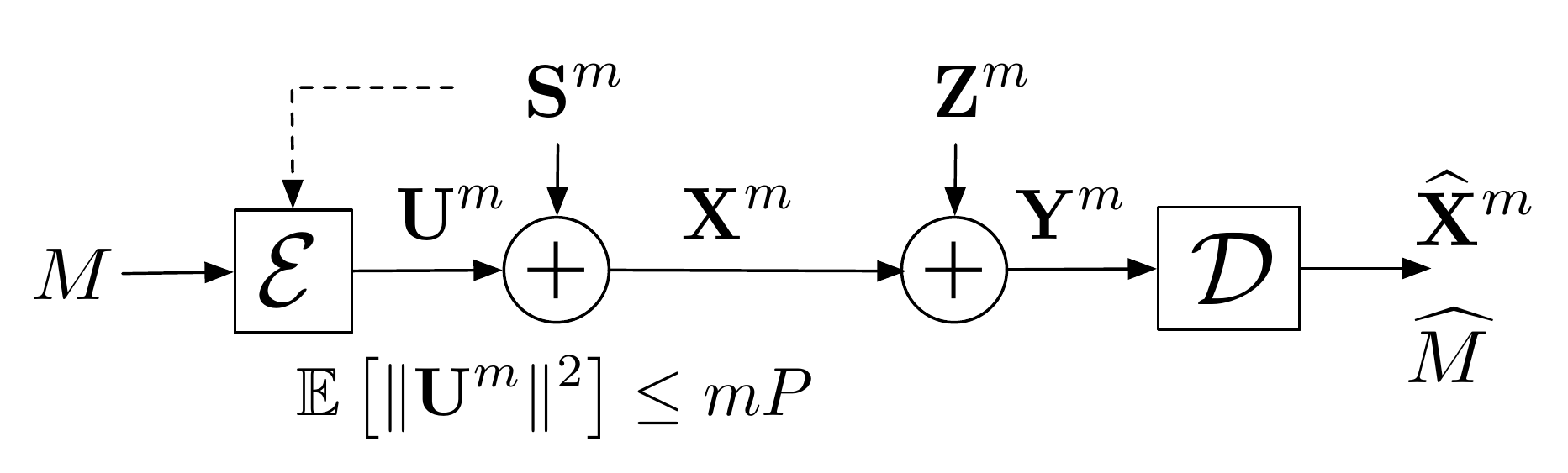}
\caption{The host signal $\m{S}$ is first modified by the encoder using a power constrained input $\m{U}$. The modified host signal $\m{X}$ and the message $M$ are then reconstructed at the decoder. The problem is to find the minimum distortion in reconstruction of $\m{X}$ given $P$, the power constraint, and $R$, the rate of reliable message transmission. }
\label{fig:blockdgm}
\end{center}
\end{figure}

In~\cite{ReversibleStegotext}, the authors assume that the host signal $\m{S}$, the modified host signal (the channel input) $\m{X}$ and the channel output $\m{Y}$ are all \textit{finite-alphabet}. In this paper, we consider the Gaussian version of their problem. The extension is non-trivial~\cite{SteinbergPersonalComm} because simple Fano's inequality-based techniques do not work for the infinite-alphabet formulation. Experience in infinite-alphabet problems might even suggest that  (asymptotic) perfect reconstruction may be impossible because the problem is set in continuous space. Intriguingly, asymptotic perfect reconstruction \textit{is} possible in our problem because the encoder can ensure that the modified host signal takes values in a discrete subset of the continuous space. We provide tight results characterizing the tradeoff between rate and power for perfect reconstruction. As is more natural in a continuous-alphabet setting, we relax the assumption of perfect recovery of the host signal by considering recovery within a specified nonzero distortion, and for this problem we provide upper and lower bounds on the tradeoff between rate, power and average distortion.

The nonzero distortion problem is closely related to the vector version of a famous distributed control problem called the Witsenhausen counterexample~\cite{Witsenhausen68} --- at zero communication rate, the two problems are the same~\cite{WitsenhausenJournal}. The scalar counterexample is believed to be quite challenging (see~\cite{WitsenhausenJournal} for a survey of prior results showing why it is believed to be so). As a conceptual simplification, Grover and Sahai~\cite{WitsenhausenJournal} considered the long-blocklength limit of the counterexample. Further, they relaxed the requirement of obtaining a provably optimal strategy to the weaker objective of obtaining strategies that attain within a constant factor of the optimal cost. For the weighted sum of power and average distortion costs (see Section~\ref{sec:probstat}), they then show that dirty-paper coding techniques attain within a factor of $2$ of the optimal cost for all problem parameters (\textit{i.e.} the weights and the variances of the random variables). Backing off from the infinite blocklength limit, Grover, Park and Sahai~\cite{ConComPaper} then showed that similar constant-factor results can also be obtained for finite vector lengths, including the scalar case. The achievable strategy, which yields the upper bounds, now uses lattices instead of random codebooks. The lower bound is obtained by applying sphere-packing ideas from information theory to the bound of~\cite{WitsenhausenJournal}.

The lower bound in this paper specialized to rate zero provides an improved lower bound to the costs of the vector Witsenhausen counterexample in the long-blocklength limit. Using this improved bound, we show that the ratio of upper and lower bounds is smaller than $1.3$ regardless of the choice of the weights and the problem parameters. This is an improvement over the previously best known maximum ratio of two~\cite{WitsenhausenJournal}.  

Control theory has long wrestled with the Witsenhausen counterexample. Because it is a canonical problem, a comprehensive distributed-control theory would necessarily include a good understanding of the counterexample. Information-theory has had long-standing canonical problems of its own. In a line of investigation started by Gupta and Kumar~\cite{GuptaKumar}, the question of the capacity of a large wireless network is studied. By restricting attention to obtaining just the scaling of the total capacity, the bar for what might constitute a reasonable information-theoretic solution was lowered. More recently, the calculation of channel capacity to within a finite number of bits\footnote{Our constant-factor results on control costs are closely related to results on bounded gap from capacity. A factor of $2$ approximation in power would be a slightly stronger result than a $\frac{1}{2}$-bit approximation in the capacity of a real channel.} for canonical information-theory problems (e.g. the interference channel~\cite{EtkinOneBit}) has led to significant advances in understanding capacity for larger network communication problems~\cite{DeterministicModel,SalmanThesis}. The recent results on Witsenhausen's counterexample thus raise a parallel hope in distributed control.


\section{Problem Statement}
\label{sec:probstat}
The host signal $\m{S}$ is distributed $\mathcal{N}(0,\sigma^2\mathbb{I})$, and the message $M$ is independent of $\m{S}$ and distributed uniformly over $\{1,2,\ldots,2^{mR}\}$. The encoder $\mathcal{E}_m$ maps $(M,\m{S})$ to $\m{X}$ by additively distorting $\m{S}$ using input $\m{U}$ of average power (for each message) at most $P$, i.e. $\expect{\|\m{S}-\m{X}\|^2}\leq mP$. Additive white Gaussian noise $\m{Z}\sim\mathcal{N}(0,\sigma_z^2\mathbb{I})$, where $\sigma_z^2=1$, is added to $\m{X}$ by the channel. The decoder $\mathcal{D}_m$ maps the channel outputs $\m{Y}$ to both an estimate $\m{\widehat{X}}$ of the modified host signal $\m{X}$ and an estimate $\widehat{M}$ of the message. 

Define the error probability $\epsilon_m(\mathcal{E}_m,\mathcal{D}_m)=\Pr(M\neq \widehat{M})$. For the encoder-decoder sequence $\{\mathcal{E}_m,\mathcal{D}_m\}_{m=1}^\infty$, define the minimum asymptotic distortion $MMSE(P,R)$ as follows
\begin{eqnarray*}
MMSE(P,R) = \underset{\{\mathcal{E}_m,\mathcal{D}_m\}_{m=1}^\infty : \epsilon_m(\mathcal{E}_m,\mathcal{D}_m)\rightarrow 0}{\inf} \;\underset{m\rightarrow\infty}{\lim \sup}\; \frac{1}{m}\expect{\|\m{X}-\m{\widehat{X}}\|^2}.
\end{eqnarray*}
We are interested in the tradeoff between the rate $R$, the power $P$, and $MMSE(P,R)$.

The conventional control-theoretic weighted cost formulation~\cite{Witsenhausen68} defines the total cost to be
\begin{equation}
J = \frac{1}{m}k^2 \|\m{U}\|^2 + \frac{1}{m}\|\m{X}-\m{\widehat{X}}\|^2,
\end{equation}
where $k\in\mathbb{R}^+$. The objective is to minimize the average cost, $\expect{J}$ at rate $R$. The average is taken over the realizations of the host signal, the channel noise, and the message. At $R=0$, the problem is the vector Witsenhausen counterexample~\cite{WitsenhausenJournal}. 

\section{Main Results}
\subsection{Lower bounds on $MMSE(P,R)$}
\begin{theorem}
\label{thm:newlower}
For the problem as stated in Section~\ref{sec:probstat}, for communicating reliably at rate $R$ with input power $P$, the asymptotic average mean-square error in recovering $\m{X}$ is lower bounded as follows. For $P\geq 2^{2R}-1$,
\begin{equation}
\label{eq:newLowerWithRateThm}
MMSE(P,R)\geq \inf_{\sigma_{SU}}\sup_{\gamma>0} \frac{1}{\gamma^2}\left(\left( \sqrt{\frac{\sigma^2 2^{2R}}{1+\sigma^2+P+2\sigma_{SU}}}  -   \sqrt{(1-\gamma)^2\sigma^2 + \gamma^2 P -2\gamma (1-\gamma)\corr}     \right)^+\right)^2,
\end{equation}
where $\max\left\{-\sigma\sqrt{P}, \frac{2^{2R}-1-P -\sigma^2}{2}\right\}   \leq\sigma_{SU}\leq\sigma\sqrt{P}$. For $P<2^{2R}-1$, reliable communication at rate $R$ is not possible. 
\end{theorem}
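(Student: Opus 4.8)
The plan is to exploit two structural facts. First, the reconstruction $\m{\widehat{X}}$ is a function of the channel output $\m{Y}$ alone, so the decoder that produces $\m{\widehat{X}}$ implicitly produces an estimate of the host $\m{S}$ through the identity $\m{X}=\m{S}+\m{U}$; routing the converse through the \emph{Gaussian} $\m{S}$, whose differential entropy is fixed, rather than through $\m{X}$, which can be essentially discrete, is what makes a converse possible at all. Second, reliable recovery of $M$ from $\m{Y}$ forces $I(M;\m{Y})\gtrsim mR$, and since $M$ is independent of $\m{S}$ this ``uses up'' rate no longer available to pin down $\m{S}$. Throughout, write $P_U=\frac1m\expect{\|\m{U}\|^2}\le P$ and $\corr=\frac1m\expect{\langle\m{S},\m{U}\rangle}$, so that $\frac1m\expect{\|\m{Y}\|^2}=1+\sigma^2+P_U+2\corr$ and, for any $\gamma>0$, $\frac1m\expect{\|\m{S}-\gamma\m{X}\|^2}=(1-\gamma)^2\sigma^2+\gamma^2P_U-2\gamma(1-\gamma)\corr$, using $\m{S}-\gamma\m{X}=(1-\gamma)\m{S}-\gamma\m{U}$.

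Fix $\gamma>0$. Applying the $L^2$ triangle inequality to $\m{S}-\gamma\m{\widehat{X}}=\gamma(\m{X}-\m{\widehat{X}})+(\m{S}-\gamma\m{X})$ gives $\sqrt{\tfrac1m\expect{\|\m{S}-\gamma\m{\widehat{X}}\|^2}}\le\gamma\sqrt{\tfrac1m\expect{\|\m{X}-\m{\widehat{X}}\|^2}}+\sqrt{(1-\gamma)^2\sigma^2+\gamma^2P_U-2\gamma(1-\gamma)\corr}$. Since $\gamma\m{\widehat{X}}$ is a function of $\m{Y}$, the left-hand side is at least the root conditional MMSE of $\m{S}$ given $\m{Y}$, which by the maximum-entropy bound together with Jensen's inequality is at least $\sqrt{\tfrac{1}{2\pi e}2^{\frac2m h(\m{S}\mid\m{Y})}}=\sigma\,2^{-\frac1m I(\m{S};\m{Y})}$. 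Rearranging, and inserting $(\cdot)^+$ to cover the case where the difference is negative, I obtain $\frac1m\expect{\|\m{X}-\m{\widehat{X}}\|^2}\ge\frac1{\gamma^2}\big(\big(\sigma\,2^{-\frac1m I(\m{S};\m{Y})}-\sqrt{(1-\gamma)^2\sigma^2+\gamma^2P_U-2\gamma(1-\gamma)\corr}\big)^+\big)^2$ for every $\gamma>0$.

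Next I bound $I(\m{S};\m{Y})$ using the rate constraint. By the chain rule, $I(\m{S};\m{Y})=I(\m{S},M;\m{Y})-I(M;\m{Y}\mid\m{S})$. For the first term, $\m{X}$ is determined by $(\m{S},M)$, so $I(\m{S},M;\m{Y})=h(\m{Y})-h(\m{Z})\le\frac m2\lo{1+\sigma^2+P_U+2\corr}$. For the second, independence of $M$ and $\m{S}$ yields $I(M;\m{Y}\mid\m{S})\ge I(M;\m{Y})$, and Fano's inequality gives $I(M;\m{Y})\ge mR(1-\epsilon_m)-1$ with $\epsilon_m=\Pr(\widehat M\ne M)\to0$. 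Hence $\tfrac2m I(\m{S};\m{Y})\le\lo{1+\sigma^2+P_U+2\corr}-2R+o(1)$, so $\sigma^2 2^{-\frac2m I(\m{S};\m{Y})}\ge\frac{\sigma^2 2^{2R}}{1+\sigma^2+P_U+2\corr}$ up to a vanishing correction, and the displayed inequality assumes exactly the shape of \eqref{eq:newLowerWithRateThm} with $P$ replaced by $P_U$. Non-negativity of $I(\m{S};\m{Y})$ forces $1+\sigma^2+P_U+2\corr\ge2^{2R}$ asymptotically; the companion estimate $I(M;\m{Y}\mid\m{S})\le I(\m{U};\m{Y}\mid\m{S})\le\frac m2\lo{1+P}$ combined with Fano forces $R\le\frac12\lo{1+P}$, which is the final sentence of the theorem.

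It remains to pass to the limit. The pair $(P_U,\corr)$ depends on $m$ but lies in the compact set $\{0\le P_U\le P,\ |\corr|\le\sigma\sqrt{P_U}\}$ by Cauchy--Schwarz, so along a subsequence realizing $MMSE(P,R)$ one can further extract a subsequence along which $(P_U,\corr)\to(P_U^\star,\corr^\star)$ in this set, with also $1+\sigma^2+P_U^\star+2\corr^\star\ge2^{2R}$. Taking the limit in the displayed bound for each fixed $\gamma$ and then taking $\sup_{\gamma>0}$ gives $MMSE(P,R)\ge G(P_U^\star,\corr^\star)$, where $G$ denotes the right-hand side of \eqref{eq:newLowerWithRateThm} with the infimum removed and evaluated at those parameters. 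For each $\gamma>0$ the inner quantity is non-increasing in $P_U$ (the first radical decreases, the second increases), hence so is $G$, so $G(P_U^\star,\corr^\star)\ge G(P,\corr^\star)$; and $\corr^\star$ satisfies $-\sigma\sqrt P\le\corr^\star\le\sigma\sqrt P$ together with $\corr^\star\ge\frac{2^{2R}-1-P-\sigma^2}{2}$ (from $P_U^\star\le P$ and the feasibility inequality just noted), so it belongs to the feasible interval of the infimum in \eqref{eq:newLowerWithRateThm}. Therefore $MMSE(P,R)\ge\inf_{\corr}G(P,\corr)$, as claimed. The main obstacle is not any single inequality---the triangle-inequality reduction to $\m{S}$, the maximum-entropy bound, and the use of the independence of $M$ and $\m{S}$ are each short---but assembling them so that the \emph{exact} expression of \eqref{eq:newLowerWithRateThm}, including the precise feasible interval for $\corr$, emerges from the compactness/limit step, while handling the $\limsup$ defining $MMSE$ and the $\sup_\gamma$ in the right order (fix $\gamma$, take the limit, then take $\sup$) and checking that permitting $P_U<P$ can only weaken the would-be optimal scheme.
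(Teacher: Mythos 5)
Your proof is correct, and it assembles the same essential ingredients as the paper's --- the Minkowski/Cauchy--Schwarz bound on the second moment of $\m{S}-\gamma\m{\widehat{X}}$, the Gaussian maximum-entropy bound, Fano's inequality, and the chain rule combined with $M\independent\m{S}$ to show that rate $R$ is subtracted from the mutual information available to pin down $\m{S}$ --- but it packages them differently in two respects. First, where the paper works on the reconstruction side of the Markov chain, writing $I(M;\widehat{M})+I(\m{S};\m{\widehat{X}})\le I(M,\m{S};\widehat{M},\m{\widehat{X}})\le I(\m{X};\m{Y})$ and then single-letterizing both ends via a time-sharing variable $Q$, you work on the channel-output side, bounding $I(\m{S};\m{Y})\le I(\m{S},M;\m{Y})-I(M;\m{Y})$ and converting $h(\m{S}\mid\m{Y})$ directly into a lower bound on the conditional MMSE of $\m{S}$ given $\m{Y}$ at the vector level; this avoids single-letterization entirely and replaces the paper's entropy-difference bound on $h(S-\gamma\widehat{X})$ with its estimation-theoretic twin. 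Second, you are more careful than the paper on two points: you track $P_U=\frac1m\expect{\|\m{U}\|^2}\le P$ explicitly and dispose of the slack by monotonicity of the bound in $P_U$ (the paper silently sets $\expect{U^2}=P$), and you prove $R\le\frac12\lo{1+P}$ by the elementary estimate $I(M;\m{Y})\le I(\m{U};\m{Y}\mid\m{S})\le\frac m2\lo{1+P}$ rather than citing Costa. The compactness/subsequence argument for passing to the limit while respecting the $\limsup$ in the definition of $MMSE(P,R)$ and the order of $\sup_\gamma$ is handled correctly and is, if anything, tighter bookkeeping than the paper provides.
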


\begin{corollary}\label{coro:wit}
For the vector Witsenhausen problem with $\expect{\|\m{U}\|^2}\leq mP$, the following is a lower bound on the $MMSE$ in the estimation of $\m{X}$.
\begin{equation}
\label{eq:newLower}
MMSE(P,0)\geq \inf_{\sigma_{SU}}\sup_{\gamma>0} \frac{1}{\gamma^2}\left(\left( \sqrt{\frac{\sigma^2}{1+\sigma^2+P+2\sigma_{SU}}}  -   \sqrt{(1-\gamma)^2\sigma^2 + \gamma^2 P -2\gamma (1-\gamma)\corr}     \right)^+\right)^2.
\end{equation}
where $\sigma_{SU}\in [-\sigma\sqrt{P},\sigma\sqrt{P}]$. 
\end{corollary}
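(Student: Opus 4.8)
The plan is to combine a channel-capacity (Fano-type) constraint with a geometric/distortion argument in the style of Witsenhausen's original rearrangement bound, adapted to the vector long-blocklength setting. First I would establish the feasibility constraint $P \geq 2^{2R}-1$: since $\m{X}$ must carry the message $M$ reliably through an AWGN channel of noise variance $1$, and $\m{X}$ has per-letter power roughly $\sigma^2 + P + 2\sigma_{SU}$ (here $\sigma_{SU}$ is the per-letter correlation $\frac1m\expect{\langle \m{S},\m{U}\rangle}$, which satisfies $|\sigma_{SU}|\le \sigma\sqrt P$ by Cauchy–Schwarz), the rate must obey $R \le \frac12\log_2(1 + \sigma^2 + P + 2\sigma_{SU})$. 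When $P < 2^{2R}-1$ even the best correlation cannot meet this, giving infeasibility; this also pins down the extra lower constraint $\sigma_{SU} \geq \frac{2^{2R}-1-P-\sigma^2}{2}$ that appears in the theorem.

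The core of the argument is the lower bound itself. For a fixed admissible value of $\sigma_{SU}$, I would argue that the decoder's estimate $\m{\widehat X}$, being a function of $\m{Y} = \m{X}+\m{Z}$, cannot simultaneously be close to $\m{X}$ and be ``well-separated'' in the way message-carrying requires. Concretely, I would use the test-channel / sphere-packing idea from~\cite{WitsenhausenJournal}: first decode $\widehat M$ (possible since $\epsilon_m \to 0$), which by itself forces $\m{X}$ to lie in one of $2^{mR}$ roughly disjoint regions, so $\m{X}$ effectively has a ``discrete'' component of log-volume $\approx mR$; the remaining continuous uncertainty in $\m{X}$ given $M$, after passing through the channel and being estimated as $\m{\widehat X}$, is what the distortion must account for. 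Introducing a free scaling parameter $\gamma>0$ (the analogue of Witsenhausen's scaling of the control), I would write a chain of triangle-inequality bounds in $L^2$: the ``signal'' term $\sqrt{\sigma^2 2^{2R}/(1+\sigma^2+P+2\sigma_{SU})}$ is what $\gamma$ times a linear MMSE estimate would achieve, the subtracted term $\sqrt{(1-\gamma)^2\sigma^2 + \gamma^2 P - 2\gamma(1-\gamma)\sigma_{SU}}$ is the norm of $\gamma\m{X} - (1-\gamma)\m{S} \cdot(\text{something})$... more precisely the residual $\|(1-\gamma)\m{S} + \gamma\m{U}\|$ type quantity, and the difference, squared and divided by $\gamma^2$, lower bounds $\frac1m\expect{\|\m{X}-\m{\widehat X}\|^2}$. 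Taking the supremum over $\gamma$ (best choice of the auxiliary parameter) and then the infimum over the adversarially-chosen but constrained $\sigma_{SU}$ yields \eqref{eq:newLowerWithRateThm}.

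The main obstacle I anticipate is making the ``discrete component of $\m{X}$'' argument rigorous in continuous space with only a vanishing-error (not zero-error) hypothesis: I need to convert ``$M$ is decodable from $\m{Y}$'' into a clean lower bound on how spread out $\m{X}$ must be, controlling the typical-set exceptions and the Gaussian tails so that the $(\cdot)^+$ truncation is all that survives. I expect this to require an $\epsilon$-typicality argument (conditioning on a high-probability set $\typical$, as the paper's macros suggest) together with a careful accounting of second moments under that conditioning, and a continuity/limiting argument to push $\epsilon\to 0$ and $m\to\infty$. The optimization structure ($\inf_{\sigma_{SU}}\sup_{\gamma}$) is then a routine consequence: everything in sight is an explicit function of $(\sigma_{SU},\gamma)$, and the constraint set for $\sigma_{SU}$ is the intersection of the Cauchy–Schwarz bound with the capacity feasibility bound derived in the first step. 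Corollary~\ref{coro:wit} then follows by simply setting $R=0$, which removes the factor $2^{2R}=1$ and drops the capacity constraint on $\sigma_{SU}$, leaving only $\sigma_{SU}\in[-\sigma\sqrt P,\sigma\sqrt P]$.
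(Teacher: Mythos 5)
Your proposal gets the final algebraic shape, the $\inf_{\sigma_{SU}}\sup_{\gamma}$ structure, and the Cauchy--Schwarz constraint $|\sigma_{SU}|\leq\sigma\sqrt{P}$ right, but the core mechanism that ties $\frac{1}{m}\expect{\|\m{X}-\m{\widehat{X}}\|^2}$ to the term $\sqrt{\sigma^2/(1+\sigma^2+P+2\sigma_{SU})}$ is missing, and the route you propose to supply it is the one you yourself flag as unresolved. The paper does not use sphere-packing, typicality sets, or any ``discrete component of $\m{X}$'' argument for this bound. It uses the data-processing inequality on the Markov chain $\m{S}\rightarrow\m{X}\rightarrow\m{Y}\rightarrow\m{\widehat{X}}$, namely $I(\m{S};\m{\widehat{X}})\leq I(\m{X};\m{Y})$, single-letterizes both sides via a time-sharing variable $Q$, upper bounds $I(X;Y)$ by the AWGN capacity $\frac{1}{2}\log_2(1+\sigma^2+P+2\sigma_{SU})$, and lower bounds $I(S;\widehat{X})\geq h(S)-h(S-\gamma\widehat{X})$ for an arbitrary $\gamma>0$. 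The decisive step is then purely second-moment: writing $S-\gamma\widehat{X}=(1-\gamma)S-\gamma U-\gamma(\widehat{X}-X)$ (note the sign: the residual is $(1-\gamma)S-\gamma U$, not $(1-\gamma)S+\gamma U$ as in your sketch, which is what produces the $-2\gamma(1-\gamma)\sigma_{SU}$ cross term), bounding $h(S-\gamma\widehat{X})$ by the Gaussian maximum-entropy bound in terms of $\expect{(S-\gamma\widehat{X})^2}$, and splitting that second moment by Cauchy--Schwarz into $\bigl(\sqrt{(1-\gamma)^2\sigma^2+\gamma^2P-2\gamma(1-\gamma)\sigma_{SU}}+\gamma\sqrt{\expect{(\widehat{X}-X)^2}}\bigr)^2$. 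Solving the resulting inequality for $\expect{(\widehat{X}-X)^2}$ gives the bound directly, for every finite $m$, with no limiting or typicality argument at all.

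Your characterization of the signal term as ``what $\gamma$ times a linear MMSE estimate would achieve'' is not something you can assert, since $\m{\widehat{X}}$ is an arbitrary measurable function of $\m{Y}$; it is the mutual-information chain above that forces that term to appear. The sphere-packing/typicality program you outline is essentially the finite-blocklength machinery of the companion work, and the paper explicitly notes that Fano-style counting arguments are problematic in the continuous-alphabet setting --- which is precisely why it routes everything through differential entropies and second moments instead. A secondary structural difference: the paper proves the $R=0$ corollary first and then extends to $R>0$ by augmenting the data-processing inequality to $I(M,\m{S};\widehat{M},\m{\widehat{X}})\leq I(\m{X};\m{Y})$ and peeling off $I(M;\widehat{M})\geq m(R-\delta_m)$ via Fano (applied only to the discrete message, where it is unproblematic), rather than deriving the general theorem and specializing. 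Your derivation of the feasibility constraints on $P$ and $\sigma_{SU}$ is consistent with the paper's.
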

\begin{proof}\textit{[Of Theorem~\ref{thm:newlower}]}
For conceptual clarity, we first derive the result for the case $R=0$ (Corollary~\ref{coro:wit}). The tools developed are then used to derive the lower bound for $R> 0$. 
 
\begin{proof}\textit{[Of Corollary~\ref{coro:wit}]}

For any chosen pair of encoding map $\mathcal{E}_m$ and decoding map $\mathcal{D}_m$, there is a Markov chain $\m{S}\rightarrow \m{X}\rightarrow \m{Y}\rightarrow \m{\widehat{X}}$. Using the data-processing inequality
\begin{equation}
\label{eq:dpi}
I(\m{S};\m{\widehat{X}}) \leq I(\m{X};\m{Y}).
\end{equation}
The terms in the inequality can be bounded by single letter expressions as follows. Define $Q$ as a random variable uniformly distributed over $\{1,2,\ldots,m\}$. Define $S=S_Q$, $U=U_Q$, $X=X_Q$, $Z=Z_Q$, $Y=Y_Q$ and $\widehat{X}=\widehat{X}_Q$. Then, 
\begin{eqnarray}
I(\m{X};\m{Y}) &=& h(\m{Y}) - h(\m{Y}|\m{X})\nonumber\\
&\overset{(a)}{\leq}& \sum_ih(Y_i) - h(\m{Y}|\m{X})\nonumber\\
&=& \sum_ih(Y_i) - h(Y_i|X_{i})\nonumber\\
&=& \sum_iI(X_{i};Y_i)\nonumber\\
&=& m I(X;Y|Q)\nonumber\\
&= & m\left( h(Y|Q) - h(Y|X,Q)\right)\nonumber\\
&\leq & m\left( h(Y) - h(Y|X,Q)\right)\nonumber\\
&\overset{(b)}{=} &m\left( h(Y) - h(Y|X)\right) = m I(X;Y),
\label{eq:ixy}
\end{eqnarray}
where $(a)$ follows from an application of the chain-rule for entropy followed by using the fact that conditioning reduces entropy, and $(b)$ follows from the observation that the additive noise $Z_i$ is iid across time, and independent of the input $X_i$ (thus $Y\independent Q|X$). Also, 
\begin{eqnarray}
I(\m{S};\m{\widehat{X}}) &=& h(\m{S}) - h(\m{S}|\m{\widehat{X}}) \nonumber\\
& = &  \sum_i h(S_i) - h(\m{S}|\m{\widehat{X}}) \nonumber\\
& \overset{(a)}{\geq} &  \sum_i \left(h(S_i) - h(S_{i}|\widehat{X}_{i})\right) \nonumber\\
& = &  \sum_i I(S_{i};\widehat{X}_{i}) = m I(S;\widehat{X}|Q)\nonumber\\
& = & m\left(h(S|Q) - h(S|\widehat{X},Q)\right)\nonumber\\
& \overset{(b)}{\geq} & m\left( h(S) - h(S|\widehat{X})\right) = mI(S;\widehat{X}),
\label{eq:isx}
\end{eqnarray}
where $(a)$ and $(b)$ again follow from the fact that conditioning reduces entropy, and $(b)$ also uses the observation that since $S_i$ are iid, $S$, $S_i$, and $S|Q=q$ are distributed identically.

Now, using~\eqref{eq:dpi},~\eqref{eq:ixy} and~\eqref{eq:isx},
\begin{equation}
\label{eq:bigineq}
mI(S;\widehat{X})\leq I(\m{S};\m{\widehat{X}}) \leq I(\m{X};\m{Y}) \leq mI(X;Y).
\end{equation}
Also observe that from the definitions of $S$, $X$, $\widehat{X}$ and $Y$, $\expect{d(\m{S},\m{X})}=\expect{d(S,X)}$, and $\expect{d(\m{X},\m{\widehat{X}})}=\expect{d(X,\widehat{X})}$. 

Using the Cauchy-Schwartz inequality, the correlation $\sigma_{SU}=\expect{SU}$  must satisfy the following constraint,
\begin{equation}
\label{eq:sigmasu}
|\sigma_{SU}|= |\expect{SU}|\leq \sqrt{\expect{S^2}}\sqrt{\expect{U^2}}\leq \sigma\sqrt{P}.
\end{equation}
Also,
\begin{equation}
\label{eq:xpow}
\expect{X^2}=\expect{(S+U)^2}= \sigma^2 + P + 2\sigma_{SU}.
\end{equation} 
Since $Z=Y-X\independent X$, and a Gaussian input distribution maximizes the mutual information across an average-power-constrained AWGN channel,
\begin{equation}
\label{eq:ixycompute}
I(X;Y)\leq \frac{1}{2}\lo{1+\frac{P+\sigma^2+2\sigma_{SU}}{1}}.
\end{equation}
\begin{eqnarray}
I(S;\widehat{X}) &=& h(S) - h(S|\widehat{X})\nonumber\\
& = & h(S) - h(S-\gamma \widehat{X}|\widehat{X})\;\forall \gamma \nonumber\\
& \overset{(a)}{\geq} & h(S) - h(S-\gamma \widehat{X})\nonumber\\
& = & \frac{1}{2}\lo{2\pi e\sigma^2} - h(S-\gamma \widehat{X}),
\label{eq:lossy}
\end{eqnarray}
where $(a)$ follows from the fact that conditioning reduces entropy. Also note here that the result holds for any $\gamma>0$, and in particular, $\gamma$ can depend on $\sigma_{SU}$. Now, 
\begin{eqnarray}
\label{eq:sminusxhat}
 h(S-\gamma\widehat{X}) &= & h(S-\gamma(\widehat{X}-X) -\gamma X)\nonumber\\
& = & h\left(S -\gamma (\widehat{X}-X)- \gamma S -\gamma U \right)\nonumber \\
& = & h\left( (1-\gamma)S-\gamma U -\gamma (\widehat{X}-X) \right).
\end{eqnarray}
The second moment of a sum of two random variables $A$ and $B$ can be bounded as follows
\begin{eqnarray}
\label{eq:aplusbsq}
\expect{(A+B)^2} & = & \expect{A^2}+\expect{B^2}+2\expect{AB}\nonumber\\
& \overset{\text{Cauchy-Schwartz ineq.}}{\leq} & \expect{A^2}+\expect{B^2}+2\sqrt{\expect{A^2}}\sqrt{\expect{B^2}}\nonumber\\
& = & (\sqrt{\expect{A^2}}  + \sqrt{\expect{B^2}})^2,
\end{eqnarray}
with equality when $A$ and $B$ are aligned, i.e. $A=\lambda B$ for some $\lambda\in \mathbb{R}$. For the random variable under consideration in~\eqref{eq:sminusxhat}, choosing $A = (1-\gamma)S-\gamma U$, and $B = -\gamma (\widehat{X}-X)$ in~\eqref{eq:aplusbsq}
\begin{eqnarray}
\label{eq:align}
\expect{\left((1-\gamma)S - \gamma U -\gamma (\widehat{X}-X) \right)^2 } \leq \left(\sqrt{(1-\gamma)^2\sigma^2 + \gamma^2 P -2\gamma (1-\gamma)\corr} + \gamma \sqrt{\expect{(\widehat{X}-X)^2}}\right)^2.
\end{eqnarray}
Equality is obtained by aligning\footnote{In general, since $\m{\widehat{X}}$ is a function of $\m{Y}$, this alignment is not actually possible when the recovery of $\m{X}$ is not exact. The derived bound is therefore loose.} $X-\widehat{X}$ with $(1-\gamma)S - \gamma U$. Thus,
\begin{eqnarray}
I(S;\widehat{X}) &\geq &\frac{1}{2}\lo{2\pi e\sigma^2} - h(S-\gamma\widehat{X})\nonumber\\
&\geq & \frac{1}{2}\lo{\frac{\sigma^2}{\left(\sqrt{(1-\gamma)^2\sigma^2 + \gamma^2 P -2\gamma (1-\gamma)\corr} + \gamma \sqrt{\expect{(\widehat{X}-X)^2}}\right)^2}}.
\label{eq:isxhatcompute}
\end{eqnarray}
Using~\eqref{eq:bigineq}, $I(S;\widehat{X})\leq I(X;Y)$. Using the lower bound on $I(S;\widehat{X})$ from~\eqref{eq:isxhatcompute} and the upper bound on $I(X;Y)$ from~\eqref{eq:ixycompute}, we get
\begin{eqnarray*}
\frac{1}{2}\lo{\frac{\sigma^2}{\left(\sqrt{(1-\gamma)^2\sigma^2 + \gamma^2 P -2\gamma (1-\gamma)\corr} + \gamma \sqrt{\expect{(\widehat{X}-X)^2}}\right)^2}} \leq \frac{1}{2}\lo{1+\frac{P+\sigma^2+2\sigma_{SU}}{1}},
\end{eqnarray*}
for the choice of $\mathcal{E}_m$ and $\mathcal{D}_m$. Since $\lo{\cdot{}}$ is a monotonically increasing function, 
\begin{eqnarray*}
\frac{\sigma^2}{\left(\sqrt{(1-\gamma)^2\sigma^2 + \gamma^2 P -2\gamma (1-\gamma)\corr} + \gamma \sqrt{\expect{(\widehat{X}-X)^2}}\right)^2}\leq 1+P+\sigma^2+2\sigma_{SU}\\
\text{i.e.}\;\;\left(\sqrt{(1-\gamma)^2\sigma^2 + \gamma^2 P -2\gamma (1-\gamma)\corr} + \gamma \sqrt{\expect{(\widehat{X}-X)^2}}\right)^2\geq \frac{\sigma^2}{1+P+\sigma^2+2\sigma_{SU}},\\
\text{Since $\gamma>0$,}\;\;\gamma \sqrt{\expect{(\widehat{X}-X)^2}}\geq \sqrt{\frac{\sigma^2}{1+P+\sigma^2+2\sigma_{SU}}}- \sqrt{(1-\gamma)^2\sigma^2 + \gamma^2 P -2\gamma (1-\gamma)\corr}. 
\end{eqnarray*}
Because the RHS may not be positive, we take the maximum of zero and the RHS and obtain the following lower bound for $\mathcal{E}_m$ and $\mathcal{D}_m$. 
\begin{equation}
\expect{(\widehat{X}-X)^2}\geq  \frac{1}{\gamma^2}\left(\left(\sqrt{\frac{\sigma^2}{1+P+\sigma^2+2\sigma_{SU}}}- \sqrt{(1-\gamma)^2\sigma^2 + \gamma^2 P -2\gamma (1-\gamma)\corr}\right)^+\right)^2.
\end{equation}
Because the bound holds for every $\gamma>0$,
\begin{equation}
\expect{(\widehat{X}-X)^2}\geq \sup_{\gamma>0} \frac{1}{\gamma^2}\left(\left(\sqrt{\frac{\sigma^2}{1+P+\sigma^2+2\sigma_{SU}}}- \sqrt{(1-\gamma)^2\sigma^2 + \gamma^2 P -2\gamma (1-\gamma)\corr}\right)^+\right)^2,
\end{equation}
for the chosen $\mathcal{E}_m$ and $\mathcal{D}_m$. Now, from~\eqref{eq:sigmasu}, $\sigma_{SU}$ can take values in $[-\sigma\sqrt{P},\sigma\sqrt{P}]$. Because the lower bound depends on $\mathcal{E}_m$ and $\mathcal{D}_m$ only through $\sigma_{SU}$, we obtain the following lower bound for all $\mathcal{E}_m$ and $\mathcal{D}_m$,
\begin{equation}
\expect{(\widehat{X}-X)^2}\geq \inf_{|\sigma_{SU}|\leq \sigma\sqrt{P}}\sup_{\gamma>0} \frac{1}{\gamma^2}\left(\left(\sqrt{\frac{\sigma^2}{1+P+\sigma^2+2\sigma_{SU}}}- \sqrt{(1-\gamma)^2\sigma^2 + \gamma^2 P -2\gamma (1-\gamma)\corr}\right)^+\right)^2,
\end{equation}
which proves Corollary~\ref{coro:wit}. Notice that we did not take limits in $m$ anywhere, and hence the lower bound holds for all values of $m$.
\end{proof}

\subsection*{The case of nonzero rate}
To prove Theorem~\ref{thm:newlower}, consider now the problem when the encoder wants to also communicate a message $M$ reliably to the decoder at rate $R$.

Using Fano's inequality, since $\Pr(M\neq \widehat{M})=\epsilon_m\rightarrow 0$ as $m\rightarrow\infty$, $H(M|\widehat{M})\leq m\delta_m$ where $\delta_m\rightarrow 0$. Thus,
\begin{eqnarray}
\nonumber I(M;\widehat{M})&=& H(M)-H(M|\widehat{M})\\
\nonumber &=& mR - H(M|\widehat{M})\\
&\geq & mR - m\delta_m = m(R-\delta_m).
\label{eq:fano}
\end{eqnarray}
As before, we consider a mutual information inequality that follows directly from the Markov chain $(M,\m{S})\rightarrow \m{X}\rightarrow \m{Y}\rightarrow (\m{\widehat{X}},\widehat{M})$ :
\begin{equation}
\label{eq:dpi2}
I(M,\m{S};\widehat{M},\m{\widehat{X}}) \leq I(\m{X};\m{Y}).
\end{equation}
The RHS can be bounded above as in~\eqref{eq:ixy}. For the LHS,
\begin{eqnarray}
\nonumber 
I(M,\m{S};\widehat{M},\m{\widehat{X}}) &=& I(M;\widehat{M},\m{\widehat{X}}) + I(\m{S};\widehat{M},\m{\widehat{X}}|M)\\
\nonumber & \geq &I(M;\widehat{M}) +   I(\m{S};\widehat{M},\m{\widehat{X}}|M)\\
\nonumber & = &I(M;\widehat{M}) +  h(\m{S}|M) -h(\m{S}|\widehat{M},\m{\widehat{X}},M)\\
\nonumber & \overset{\m{S}\independent M}{=} &I(M;\widehat{M}) +  h(\m{S}) -h(\m{S}|\widehat{M},\m{\widehat{X}},M)\\
\nonumber  &\geq &I(M;\widehat{M}) +  h(\m{S}) -h(\m{S}|\m{\widehat{X}})\\
&\geq &I(M;\widehat{M}) +  I(\m{S};\m{\widehat{X}})\nonumber\\
&\overset{\text{using~\eqref{eq:isx}}}{\geq} &I(M;\widehat{M}) +  mI(S;\widehat{X}).
\label{eq:breaking}
\end{eqnarray}
From~\eqref{eq:fano},~\eqref{eq:dpi2} and~\eqref{eq:breaking}, we obtain
\begin{eqnarray}
\label{eq:bigineq2}
\nonumber m(R-\delta_m)+  mI(S;\widehat{X})& \overset{\text{using}~\eqref{eq:fano}}{\leq} &I(M;\widehat{M}) +  mI(S;\widehat{X})\\
\nonumber & \overset{\text{using}~\eqref{eq:breaking}}{\leq} &I(M,\m{S};\widehat{M},\m{\widehat{X}}) \\
 &\overset{\text{using}~\eqref{eq:dpi2}}{\leq} & I(\m{X};\m{Y})\overset{\text{using}~\eqref{eq:ixy}}{\leq} m I(X;Y).
\end{eqnarray}
$I(X;Y)$ and $I(S;\widehat{X})$ can be bounded as before in~\eqref{eq:ixycompute} and~\eqref{eq:isxhatcompute}. Observing that as $m\rightarrow\infty$, $\delta_m\rightarrow 0$, we get the following lower bound on the $MMSE$ for nonzero rate,
\begin{equation}
\label{eq:newLowerWithRate}
MMSE(P,R)\geq \inf_{\sigma_{SU}}\sup_{\gamma>0} \frac{1}{\gamma^2}\left(\left( \sqrt{\frac{\sigma^2 2^{2R}}{1+\sigma^2+P+2\sigma_{SU}}}  -   \sqrt{(1-\gamma)^2\sigma^2 + \gamma^2 P -2\gamma (1-\gamma)\corr}     \right)^+\right)^2.
\end{equation}
In the limit $\delta_m\rightarrow 0$, we require from~\eqref{eq:bigineq2} that $I(X;Y)\geq R$. This gives the following constraint on $\sigma_{SU}$,
\begin{eqnarray}
\frac{1}{2}\lo{1+P+\sigma^2+2\sigma_{SU}}\geq R \nonumber \\
\text{i.e.}\; \sigma_{SU}\geq \frac{2^{2R}-1-P-\sigma^2}{2},
\end{eqnarray}
yielding (in conjunction with~\eqref{eq:sigmasu}) the constraint on $\sigma_{SU}$ in Theorem~\ref{thm:newlower}. The constraint on $P$ in the Theorem follows from Costa's result~\cite{CostaDirtyPaper}, because the rate $R$ must be smaller than the capacity over a power constrained AWGN channel with known interference, $\frac{1}{2}\lo{1+P}$. 
\end{proof}
It is insightful to see how the lower bound in Corollary~\ref{coro:wit} is an improvement over that in~\cite{WitsenhausenJournal}. The lower bound in~\cite{WitsenhausenJournal} is given by
\begin{equation}
\label{eq:oldbounddd}
MMSE(P,0)\geq \left(\left(  \sqrt{\frac{\sigma^2}{\sigma^2+P+2\sigma\sqrt{P}+1}}   -\sqrt{P}\right)^+\right)^2,
\end{equation}
which again holds for all $m$. Because any $\gamma$ provides a valid lower bound in Corollary~\ref{coro:wit}, choosing $\gamma=1$ in Corollary~\ref{coro:wit} provides the following (loosened) bound,
\begin{equation}
MMSE(P,0)\geq \inf_{|\sigma_{SU}|\leq \sigma\sqrt{P}}\left(\left(  \sqrt{\frac{\sigma^2}{\sigma^2+P+2\sigma_{SU}+1}}   -\sqrt{P}\right)^+\right)^2,
\end{equation}
which is minimized for $\sigma_{SU}=\sigma\sqrt{P}$. This immediately yields the lower bound~\eqref{eq:oldbounddd} of~\cite{WitsenhausenJournal}.


\subsection{The upper bound and the tightness at $MMSE=0$}
We use the combination of linear and dirty-paper coding strategies of~\cite{WitsenhausenJournal}, except that we communicate a message at rate $R$ as well. We summarize the strategy briefly, and refer the interested reader to~\cite{WitsenhausenJournal} for a detailed description and analysis of the achievability. 

The encoder divides its input into two parts $\m{U}_{lin}$ and $\m{U}_{dpc}$ of powers $P_{lin}$ and $P_{dpc}$ respectively, such that $P=P_{lin}+P_{dpc}$ (by construction, $\m{U}_{lin}$ and $\m{U}_{dpc}$ turn out to be orthogonal in the limit). We refer to $P_{lin}$ as the \textit{linear} part of the power, and $P_{dpc}$ the \textit{dirty-paper coding} part of the power. The linear part is used to scale the host signal down by a factor $\beta$ (using $\m{U}_{lin}=-\beta\m{S}$) so that the scaled down host signal has variance $\widetilde{\sigma}^2=\sigma^2(1-\beta)^2$, where $\beta^2\sigma^2=P_{lin}$. Using the remaining $P_{dpc}$ power, the transmitter dirty-paper codes against the scaled-down host signal $(1-\beta)\m{S}$ with the DPC parameter $\alpha$~\cite{CostaDirtyPaper} allowed to be arbitrary (unlike in~\cite{CostaDirtyPaper}, where it is eventually chosen to be the MMSE parameter). 

A plain DPC strategy achieves the following rate~\cite[Eq. (6)]{CostaDirtyPaper}
\begin{equation}
R = \frac{1}{2}\lo{\frac{P(P+\sigma^2+1)}{P\sigma^2(1-\alpha)^2+ P+\alpha^2\sigma^2}},
\end{equation}
The strategy recovers $\m{U}+\alpha\m{S}$ at the decoder with high probability. Because we also have a linear part here, the achieved rate is
\begin{equation}
\label{eq:alphabeta}
R = \frac{1}{2}\lo{\frac{P_{dpc}(P_{dpc}+\widetilde{\sigma}^2+1)}{P_{dpc}\widetilde{\sigma}^2(1-\alpha)^2+ P_{dpc}+\alpha^2\widetilde{\sigma}^2}}.
\end{equation}
The decoder now decodes the codeword $\m{U}_{dpc}+\alpha(1-\beta)\m{S} $. It then performs an MMSE estimation for estimating $\m{X}=\m{S}+\m{U}=(1-\beta)\m{S}+\m{U}_{dpc}$ using the channel output $\m{Y}=(1-\beta)\m{S}+\m{U}_{dpc}+\m{Z}$ and the decoded codeword $\alpha(1-\beta)\m{S} + \m{U}_{dpc}$. The obtained $MMSE$ can now be minimized over the choice of $\alpha$ and $\beta$ under the constraint~\eqref{eq:alphabeta}.

\begin{corollary}
\label{coro:match}
For a given power $P$, a combination of linear and DPC-based strategies achieves the maximum rate $C(P)$ in the perfect recovery limit $MMSE(P,R)=0$, where $C(P)$ is given by
\begin{equation}
\label{eq:upRate}
C(P)=\sup_{\sigma_{SU}\in [-\sigma\sqrt{P},0]}\frac{1}{2}\lo{ \frac{   (P\sigma^2 - \sigma_{SU}^2) (1+\sigma^2+P+2\sigma_{SU}) }{\sigma^2(\sigma^2+P+2\sigma_{SU})} }.
\end{equation}

\end{corollary}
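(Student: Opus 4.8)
The plan is to prove Corollary~\ref{coro:match} by matching a converse, extracted from Theorem~\ref{thm:newlower} specialized to $MMSE(P,R)=0$, against a direct achievability analysis of the linear-plus-DPC scheme; the converse bookkeeping is the part that needs care.

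\emph{Achievability.} I would run the linear/DPC scheme described just above the corollary, but with the dirty-paper parameter fixed at $\alpha=1$. The point of this choice is that the codeword the decoder decodes, $\m{U}_{dpc}+\alpha(1-\beta)\m{S}$, then equals the modified host $\m{X}=(1-\beta)\m{S}+\m{U}_{dpc}$ exactly, so on the high-probability event that decoding succeeds the decoder recovers $\m{X}$ with zero error and the scheme runs at $MMSE=0$. Setting $\alpha=1$ collapses the DPC rate~\eqref{eq:alphabeta} to $R=\frac12\lo{\frac{P_{dpc}(P_{dpc}+\widetilde{\sigma}^2+1)}{P_{dpc}+\widetilde{\sigma}^2}}$. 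I would then substitute $\widetilde{\sigma}^2=(1-\beta)^2\sigma^2$, $P_{lin}=\beta^2\sigma^2$, $P_{dpc}=P-\beta^2\sigma^2$, and re-parametrize by $\sigma_{SU}=\expect{SU}$: since the dirty-paper input $\m{U}_{dpc}$ is asymptotically independent of the interference $(1-\beta)\m{S}$, one has $\sigma_{SU}=-\beta\sigma^2$, whence $P_{dpc}=(P\sigma^2-\sigma_{SU}^2)/\sigma^2$ and $\widetilde{\sigma}^2+P_{dpc}=\sigma^2+P+2\sigma_{SU}$. Plugging these in turns the rate into exactly the expression inside the supremum of~\eqref{eq:upRate}, and letting $\beta$ range over $[0,\sqrt{P}/\sigma]$ makes $\sigma_{SU}$ sweep $[-\sigma\sqrt{P},0]$; hence every rate below that supremum is achievable with perfect recovery, i.e.\ $C(P)$ lower-bounds the best perfect-recovery rate.

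\emph{Converse.} I would re-run the proof of Theorem~\ref{thm:newlower} with the distortion $\expect{(\widehat{X}-X)^2}$ driven to zero (formally, along a distortion-minimizing sequence of encoder/decoder pairs, after passing to a subsequence along which the induced $\sigma_{SU}=\expect{SU}$ converges inside the compact Cauchy--Schwarz interval $[-\sigma\sqrt{P},\sigma\sqrt{P}]$). Then the triangle-inequality correction in~\eqref{eq:align}--\eqref{eq:isxhatcompute} disappears and the Gaussian upper bound on $h(S-\gamma X)$, minimized over $\gamma>0$, attains the linear-MMSE value $\tfrac{P\sigma^2-\sigma_{SU}^2}{\sigma^2+P+2\sigma_{SU}}$, so~\eqref{eq:isxhatcompute} becomes $I(S;\widehat{X})\ge\frac12\lo{\frac{\sigma^2(\sigma^2+P+2\sigma_{SU})}{P\sigma^2-\sigma_{SU}^2}}$. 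Feeding this and $I(X;Y)\le\frac12\lo{1+\sigma^2+P+2\sigma_{SU}}$ (from~\eqref{eq:ixycompute}) into $R\le I(X;Y)-I(S;\widehat{X})$ (from~\eqref{eq:bigineq2}) yields $R\le\frac12\lo{\frac{(P\sigma^2-\sigma_{SU}^2)(1+\sigma^2+P+2\sigma_{SU})}{\sigma^2(\sigma^2+P+2\sigma_{SU})}}$. Finally, this right-hand side equals $\frac{P\sigma^2-\sigma_{SU}^2}{\sigma^2}\big(1+\frac{1}{\sigma^2+P+2\sigma_{SU}}\big)$, a product of two nonnegative factors each nonincreasing in $\sigma_{SU}$ on $\sigma_{SU}\ge0$, so its maximum over $[-\sigma\sqrt{P},\sigma\sqrt{P}]$ is attained at some $\sigma_{SU}\le0$ --- giving exactly the supremum in~\eqref{eq:upRate} and hence $R\le C(P)$.

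\emph{Main obstacle.} The delicate point is the $\gamma$-minimization in the converse: the quadratic $B(\gamma)=(1-\gamma)^2\sigma^2+\gamma^2P-2\gamma(1-\gamma)\sigma_{SU}$ has unconstrained minimizer $\gamma^\ast=\frac{\sigma^2+\sigma_{SU}}{\sigma^2+P+2\sigma_{SU}}$, which is positive only when $\sigma_{SU}>-\sigma^2$; for $\sigma_{SU}\le-\sigma^2$ the infimum of $B$ over $\gamma>0$ is merely $\sigma^2$, so the bound only gives $R\le\frac12\lo{1+\sigma^2+P+2\sigma_{SU}}$. This degenerate branch still has to be shown harmless: for $\sigma_{SU}\le-\sigma^2$ one has $1+\sigma^2+P+2\sigma_{SU}\le 1+P-\sigma^2$, which is precisely the value of the expression inside the supremum in~\eqref{eq:upRate} at $\sigma_{SU}=-\sigma^2$, hence $\le 2^{2C(P)}$. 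Modulo this case analysis and the monotonicity argument above, the remainder is routine algebra.
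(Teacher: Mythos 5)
Your proposal is correct and follows essentially the same route as the paper: the same achievability (the linear-plus-DPC scheme with $\alpha=1$ and the reparametrization $\sigma_{SU}=-\sigma\sqrt{P_{lin}}$) and the same converse (setting the Theorem~\ref{thm:newlower} bound to zero with $\gamma=\gamma^*=\frac{\sigma^2+\sigma_{SU}}{\sigma^2+P+2\sigma_{SU}}$, then arguing the supremum is attained at $\sigma_{SU}\leq 0$). The one substantive difference is your explicit treatment of the degenerate branch $\sigma_{SU}\leq-\sigma^2$, where $\gamma^*\leq 0$ lies outside the admissible range $\gamma>0$ and only the weaker bound $2^{2R}\leq 1+\sigma^2+P+2\sigma_{SU}\leq 1+P-\sigma^2$ is available; the paper's proof silently substitutes $\gamma^*$ without checking its sign, and your observation that this weaker bound is dominated by the value of the supremand at $\sigma_{SU}=-\sigma^2$ is a correct and worthwhile patch of that small gap.
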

\begin{proof}

\textit{The achievability}

The combination of linear and DPC-based strategies of~\cite{WitsenhausenJournal} recovers $\m{U}_{dpc}+\alpha(1-\beta)\m{S}$ at the decoder with high probability. In order to perfectly recover $\m{X}=(1-\beta)\m{S}+\m{U}_{dpc}$, we can use $\alpha=1$, and hence the strategy would achieve a rate of
\begin{equation}
R_{ach} = \sup_{P_{lin},P_{dpc}: P=P_{lin}+P_{dpc}}\frac{1}{2}\lo{\frac{P_{dpc}(P_{dpc}+\widetilde{\sigma}^2+1)}{P_{dpc}+\widetilde{\sigma}^2}},
\end{equation}
where we take a supremum over $P_{lin},P_{dpc}$ such that they sum up to $P$. Let $\sigma_{SU}= -\sigma\sqrt{P_{lin}}$ (note that as $P_{lin}$ varies from $0$ to $P$, $\sigma_{SU}$ varies from $0$ to $-\sigma\sqrt{P}$). Then, $P_{dpc}= P - \frac{\sigma_{SU}^2}{\sigma^2}$, and $P_{dpc}+\widetilde{\sigma}^2 = P_{dpc}+\sigma^2 + P_{lin}- 2\sigma\sqrt{P_{lin}} = P+\sigma^2 +2\sigma_{SU}$. Thus,
\begin{eqnarray}
\label{eq:lowRate}
R_{ach} = \sup_{\sigma_{SU}\in [-\sigma\sqrt{P},0]}\frac{1}{2}\lo{\frac{\left(P - \frac{\sigma_{SU}^2}{\sigma^2}\right)( P+\sigma^2 +2\sigma_{SU}+1)}{ P+\sigma^2 +2\sigma_{SU}}}.
\end{eqnarray}
Simple algebra shows that this expression matches that in Corollary~\ref{coro:match}.
\vspace{0.2in}

\textit{The converse}



Since we are free to choose $\gamma$ in Theorem~\ref{thm:newlower}, let $\gamma = \gamma^* = \frac{\sigma^2+ \sigma_{SU}}{\sigma^2+P+2\sigma_{SU}}$. Then, $1-\gamma^* = \frac{P+ \sigma_{SU}}{\sigma^2+P+2\sigma_{SU}}$. Thus, we get
\begin{equation}
0\geq \inf_{\sigma_{SU}} \frac{1}{\gamma^{*^2}}\left(\left( \sqrt{\frac{\sigma^2 2^{2R}}{1+\sigma^2+P+2\sigma_{SU}}}  -   \sqrt{(1-\gamma^*)^2\sigma^2 + \gamma^{*^2} P -2\gamma^* (1-\gamma^*)\corr}     \right)^+\right)^2.
\end{equation}
It has to be the case that the term inside $(\cdot{})^+$ is non-positive for some value of $\sigma_{SU}$. This immediately yields
\begin{eqnarray*}
2^{2R}&\leq &\sup_{\sigma_{SU}} \frac{1}{\sigma^2}\left((1-\gamma^*)^2\sigma^2 + \gamma^{*^2} P -2\gamma^* (1-\gamma^*)\corr \right) (1+\sigma^2+P+2\sigma_{SU})\\
& = & \sup_{\sigma_{SU}}\frac{1}{\sigma^2}\frac{\left( (P+\sigma_{SU})^2\sigma^2 + (\sigma^2 + \sigma_{SU})^2 P - 2 (P+\sigma_{SU})(\sigma^2+\sigma_{SU})\sigma_{SU}  \right)}{(\sigma^2+P+2\sigma_{SU})^2}(1+\sigma^2+P+2\sigma_{SU})\\
& =&\sup_{\sigma_{SU}} \frac{1}{\sigma^2}\frac{\left(   P^2\sigma^2 - \sigma_{SU}^2\sigma^2 +2P\sigma_{SU}\sigma^2 + P\sigma^4 - P\sigma_{SU}^2 -2\sigma_{SU}^3  \right)}{(\sigma^2+P+2\sigma_{SU})^2} (1+\sigma^2+P+2\sigma_{SU})\\
& =&\sup_{\sigma_{SU}} \frac{1}{\sigma^2}\frac{\left(   (P\sigma^2 - \sigma_{SU}^2) (P+\sigma^2 + 2\sigma_{SU})  \right)}{(\sigma^2+P+2\sigma_{SU})^2} (1+\sigma^2+P+2\sigma_{SU})\\
& =&\sup_{\sigma_{SU}} \frac{   (P\sigma^2 - \sigma_{SU}^2) (1+\sigma^2+P+2\sigma_{SU}) }{\sigma^2(\sigma^2+P+2\sigma_{SU})} 
\end{eqnarray*}
Thus, we get the following upper bound on $C(P)$,
\begin{equation}
\label{eq:upRate2}
C(P)\leq \sup_{\sigma_{SU}\in [-\sigma\sqrt{P},\sigma\sqrt{P}]}\frac{1}{2}\lo{ \frac{   (P\sigma^2 - \sigma_{SU}^2) (1+\sigma^2+P+2\sigma_{SU}) }{\sigma^2(\sigma^2+P+2\sigma_{SU})} }.
\end{equation}
The term $(P\sigma^2-\sigma_{SU}^2)$ is oblivious to the sign of $\sigma_{SU}$. However, the  term 
\begin{equation}
\frac{1+\sigma^2+P+2\sigma_{SU}}{\sigma^2+P+2\sigma_{SU}} =1 +  \frac{1}{\sigma^2+P+2\sigma_{SU}}
\end{equation}
is clearly larger for $\sigma_{SU}<0$ if we fix $|\sigma_{SU}|$. Thus the supremum in~\eqref{eq:upRate2} is attained at some $\sigma_{SU}<0$, and we get
\begin{equation}
\label{eq:upRate3}
C(P)\leq \sup_{\sigma_{SU}\in [-\sigma\sqrt{P},0]}\frac{1}{2}\lo{ \frac{   (P\sigma^2 - \sigma_{SU}^2) (1+\sigma^2+P+2\sigma_{SU}) }{\sigma^2(\sigma^2+P+2\sigma_{SU})} },
\end{equation}
which matches the expression in~\eqref{eq:lowRate}. Thus for perfect reconstruction ($MMSE=0$), the  combination of linear and DPC strategies proposed in~\cite{WitsenhausenJournal} is optimal.
\end{proof}

\section{Numerical results}

Witsenhausen's original control theoretic formulation seeks to minimize the sum
of weighted costs $k^2P+MMSE$. Fig.~\ref{fig:ratio}(b) shows that
asymptotically, the ratio of upper and new lower bounds (from Corollary~\ref{coro:wit}) on the weighted cost is bounded by $1.3$, an improvement over the ratio of $2$
in~\cite{WitsenhausenJournal}. The ridge of ratio $2$
along $\sigma^2=\frac{\sqrt{5}-1}{2}$ present in
Fig.~\ref{fig:ratio}(a) (obtained using the old bound from~\cite{WitsenhausenJournal}) does not exist with the new lower
bound since this small-$k$ regime corresponds to target $MMSE$s close
to zero -- where the new lower bound is tight. This is illustrated in
Fig.~\ref{fig:sigma} (top). Also shown in Fig.~\ref{fig:sigma} (bottom) is the lack of tightness in the bounds at small $P$. The figure explains how this looseness results in the ridge along $k\approx 1.67$ still surviving in the new ratio plot.

Fig.~\ref{fig:MMSEratio} shows the ratio of upper and lower bounds on $MMSE(P,0)$ versus $P$ and $\sigma$. While the ratio with the bound of~\cite{WitsenhausenJournal} was unbounded (Fig.~\ref{fig:MMSEratio}, top), the new ratio is bounded by a factor of $1.5$ (Fig.~\ref{fig:MMSEratio}, bottom). This is again a reflection of the tightness of the bound at small $MMSE$. A flipped perspective is shown in Fig.~\ref{fig:PowerRatio}, where we compute the ratio of upper and lower bounds on required power to attain a specified $MMSE$. As further evidence of the lack of tightness in the small-$P$ (``high distortion'') regime, the ratio of upper and lower bounds on required power diverges to infinity along the path $MMSE=\frac{\sigma^2}{\sigma^2+1}$.

Fig.~\ref{fig:nonzerorate} shows the upper and the lower bounds for $R=0.5$. Again, the bounds are not tight in the small-$P$ regime --- now the looseness is at the lowest power $P=1$ at which communication at $R=0.5$ is possible. As shown in Corollary~\ref{coro:match}, the bounds are still tight at $MMSE=0$. Fig.~\ref{fig:RvsMMSE} shows the upper and lower bounds on $MMSE$ as a function of the rate $R$ for fixed power $P=1$ and $\sigma^2$ equal to the Golden ratio. The figure demonstrates that beyond the maximum rate with zero distortion, the price of increasing rate is an increased distortion in the estimation of $\m{X}$. 

The MATLAB code for these figures can be found in~\cite{CodeForInformationEmbedding}.

\begin{figure}[htb]
\begin{center}
\includegraphics[scale=0.47]{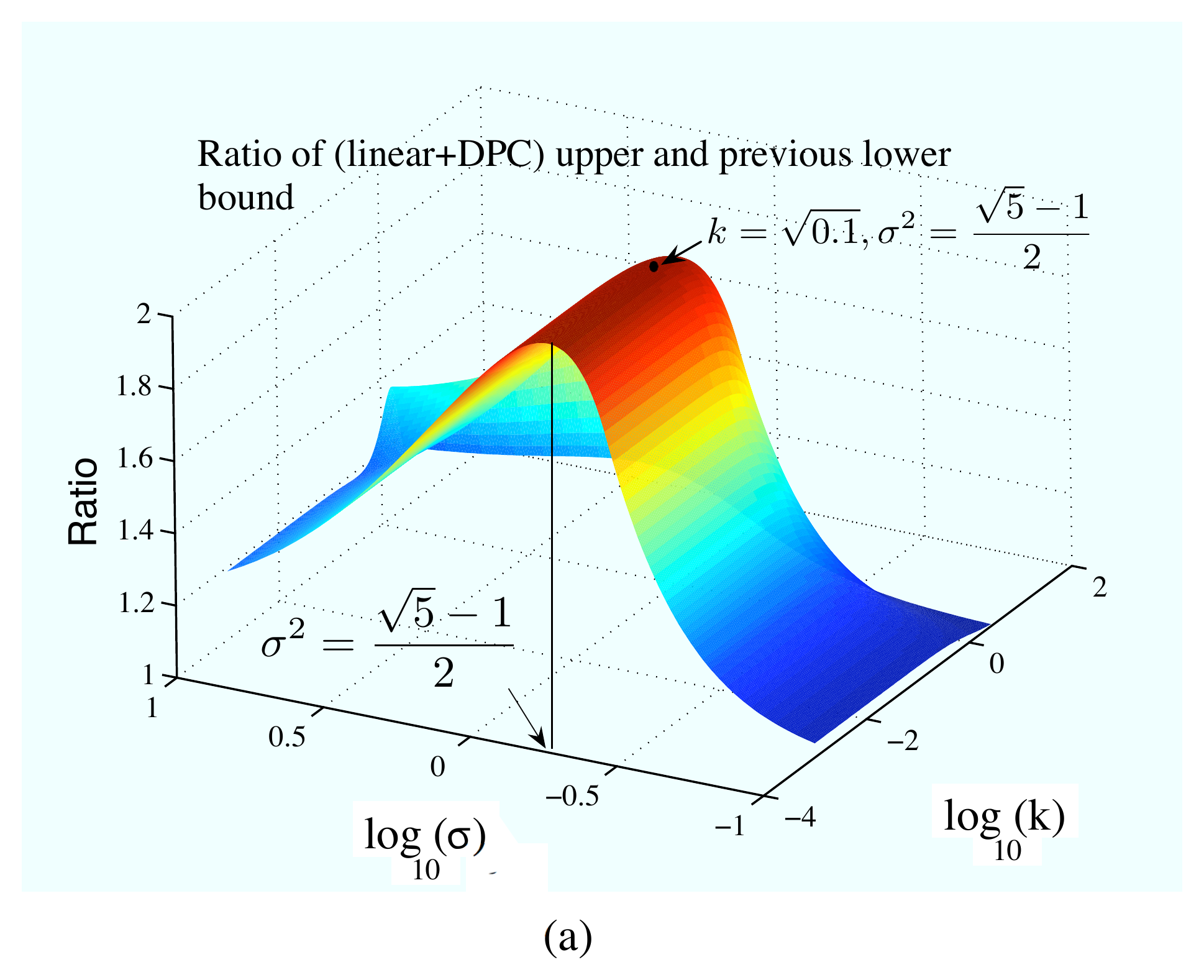}\\\includegraphics[scale=0.47]{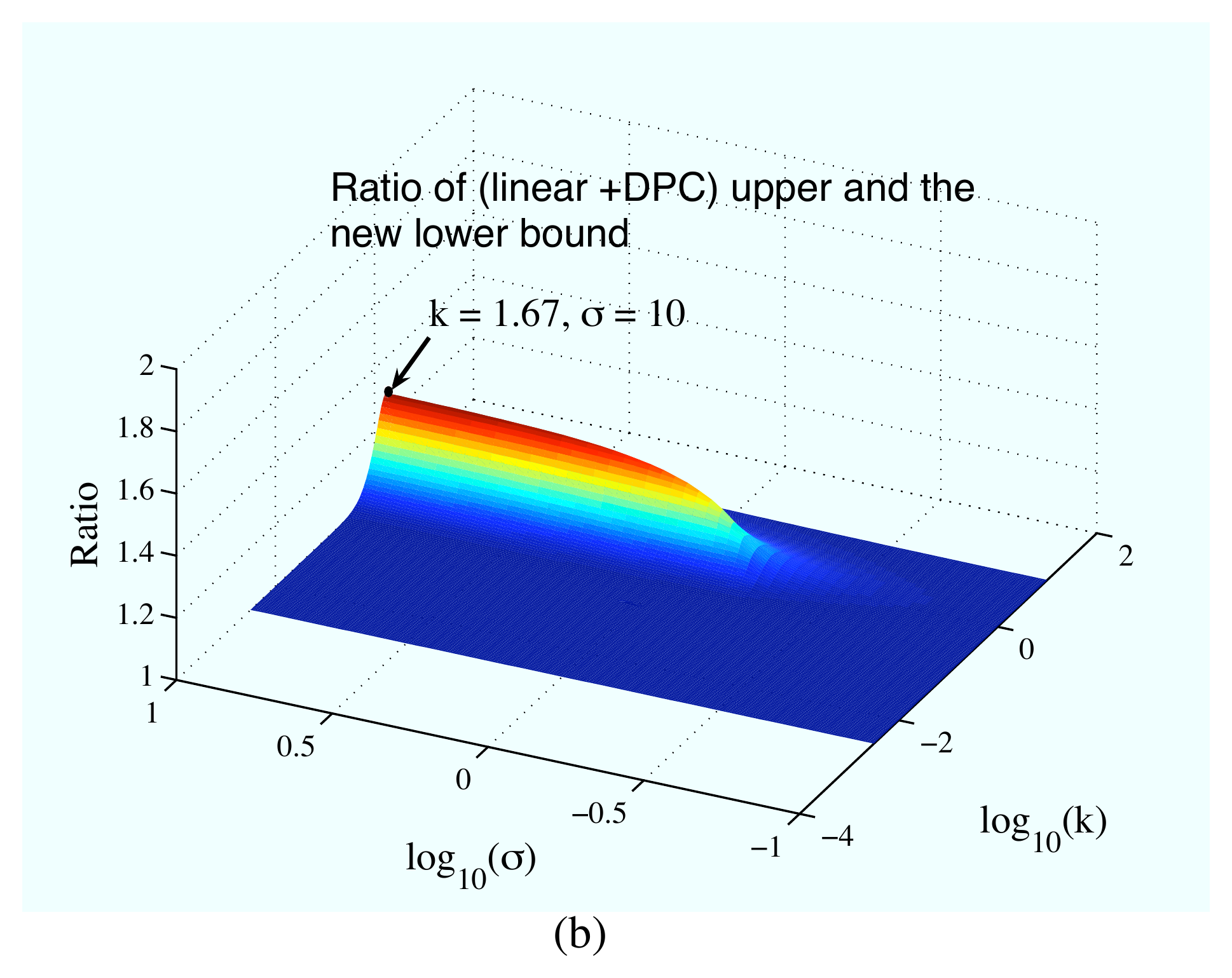}
\caption{The ratio of upper and lower bounds on the total asymptotic cost for the vector Witsenhausen counterexample with the lower bound taken from~\cite{WitsenhausenJournal} in (a) and from Corollary~\ref{coro:wit} in (b). As compared to the previous best known ratio of $2$~\cite{WitsenhausenJournal}, the ratio here is smaller than $1.3$. Further, an infinitely long ridge along $\sigma^2=\frac{\sqrt{5}-1}{2}$ and small $k$ that is present in lower bounds of~\cite{WitsenhausenJournal} is no longer present here. This is a consequence of the tightness lower bound at $MMSE=0$, and hence for small $k$. A ridge remains along $k\approx 1.67$ ($\log_{10}(k)\approx 0.22$) and large $\sigma$, and this can be understood by observing Fig.~\ref{fig:sigma} for $\sigma=10$.}
\label{fig:ratio}
\end{center}
\end{figure}

\begin{figure}[htb]
\begin{center}
\includegraphics[scale=0.55]{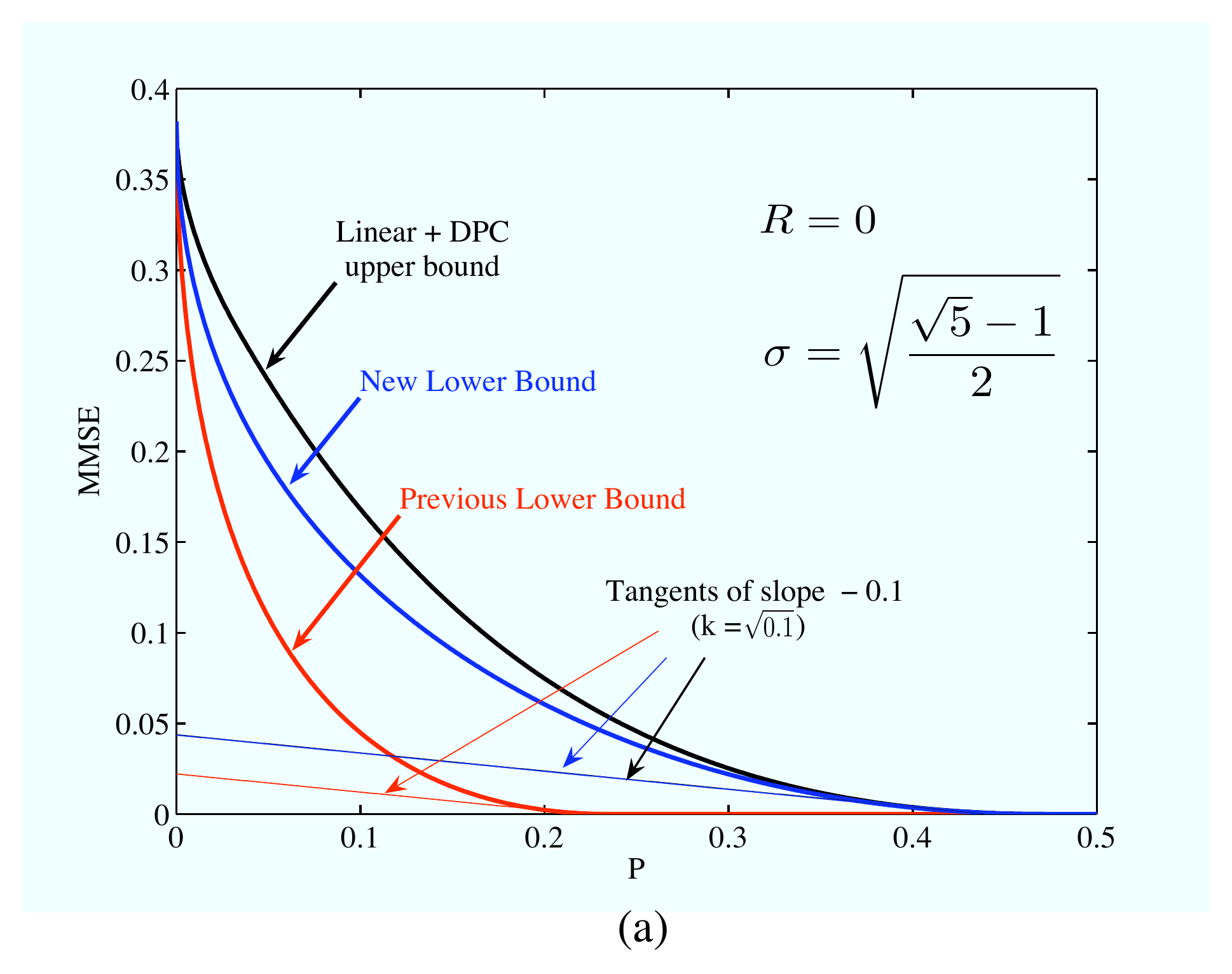}\\\includegraphics[scale=0.55]{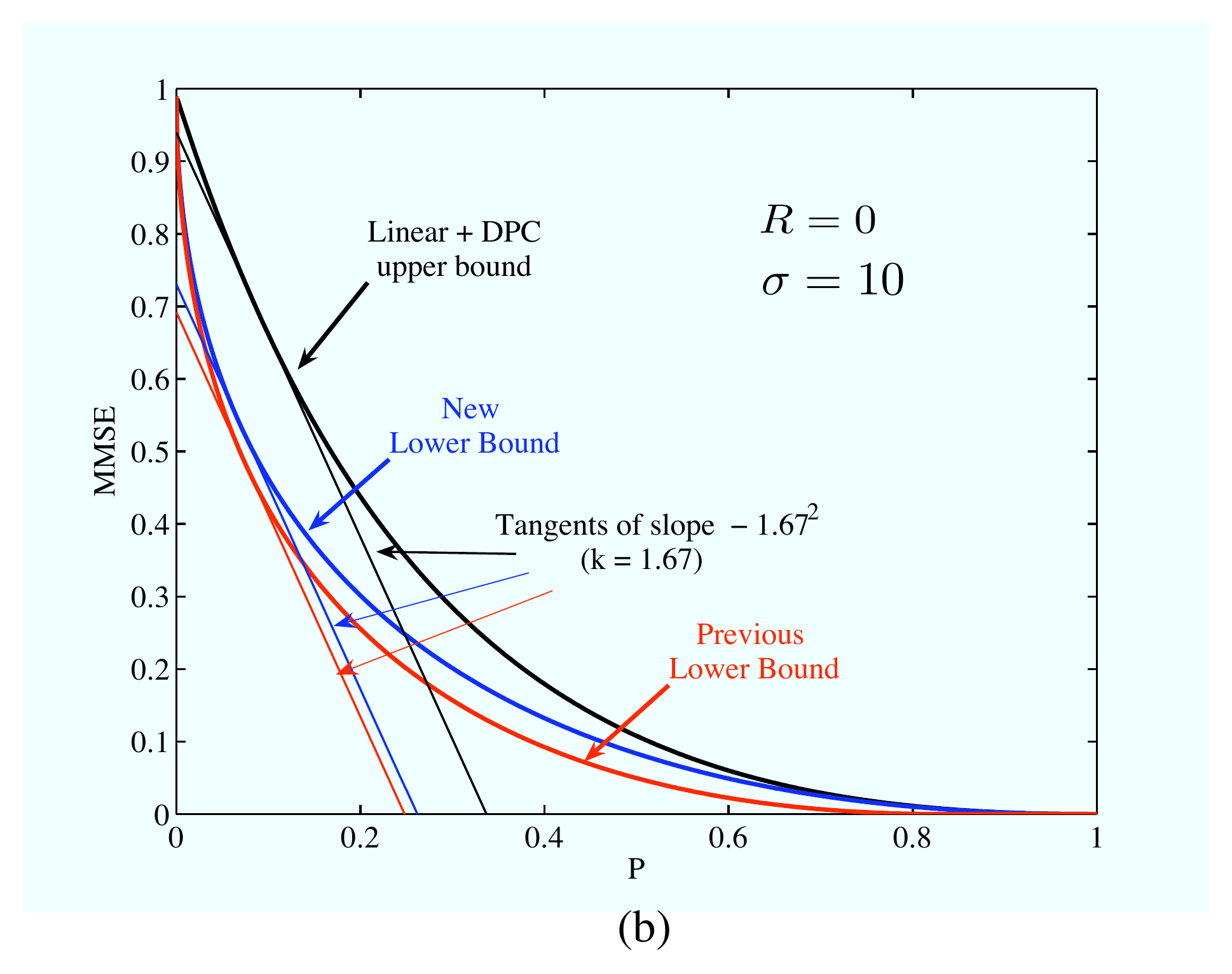}
\caption{Upper and lower bounds on  asymptotic $MMSE$ vs $P$ for $\sigma=\sqrt{\frac{\sqrt{5}-1}{2}}$ (square-root of the Golden ratio; Fig. (a)) and $\sigma=10$ (b) for zero-rate (the vector Witsenhausen counterexample). Tangents are drawn to evaluate the total cost for $k=\sqrt{0.1}$ for $\sigma=\sqrt{\frac{\sqrt{5}-1}{2}}$, and for  $k=1.67$ for $\sigma=10$ (slope $= -k^2$). The intercept on the $MMSE$ axis of the tangent provides the respective bound on the total cost. The tangents to the upper bound and the new lower bound almost coincide for small values of $k$. At $k\approx 1.67$ and $\sigma=10$, however, our bound is not  significantly better than that in~\cite{WitsenhausenJournal} and hence the ridge along $k\approx 1.67$ remains in the new ratio plot in Fig.~\ref{fig:ratio}.}
\label{fig:sigma}
\end{center}
\end{figure}


\begin{figure}[htb]
\begin{center}
\includegraphics[scale=0.47]{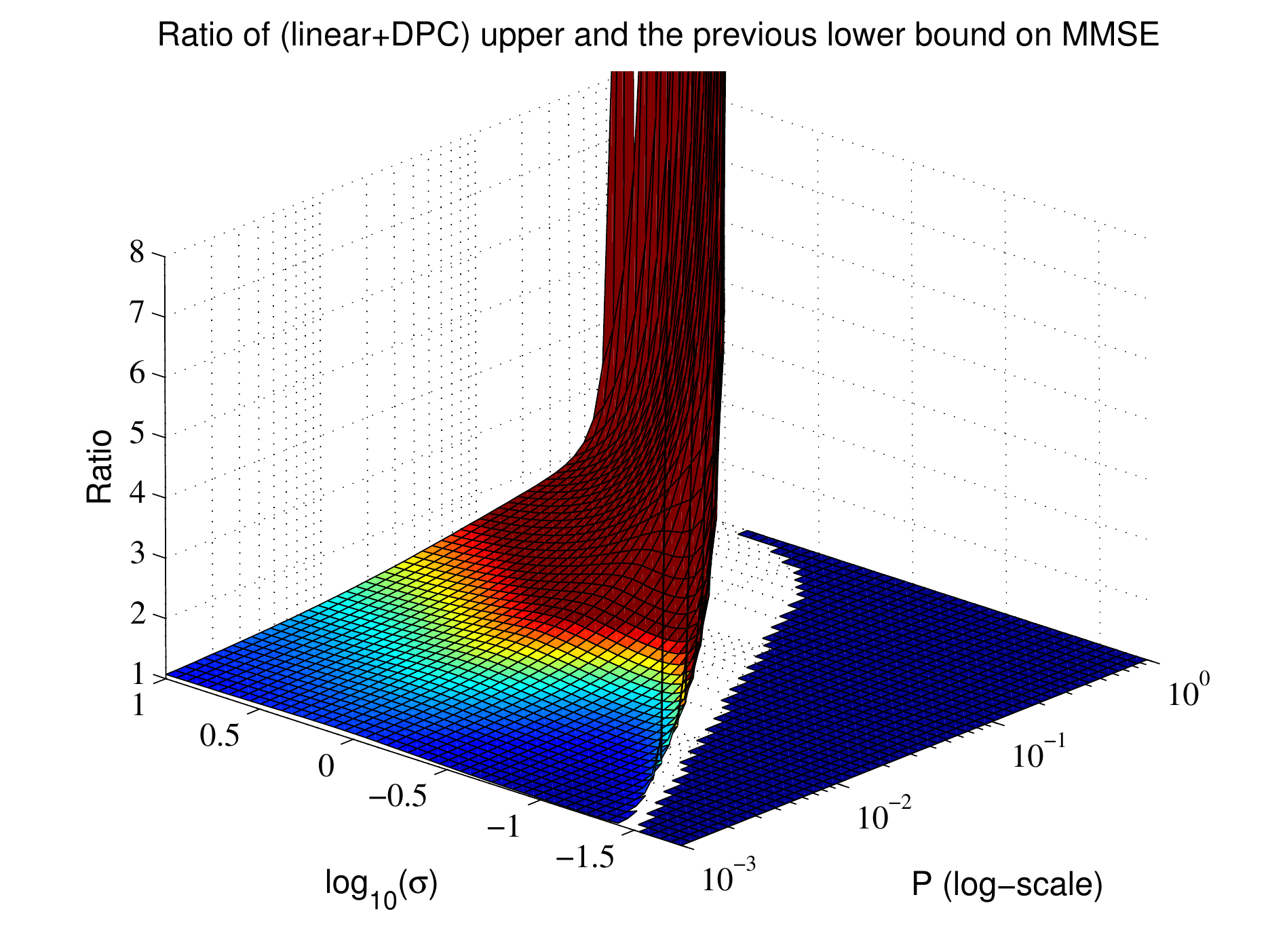}
\includegraphics[scale=0.47]{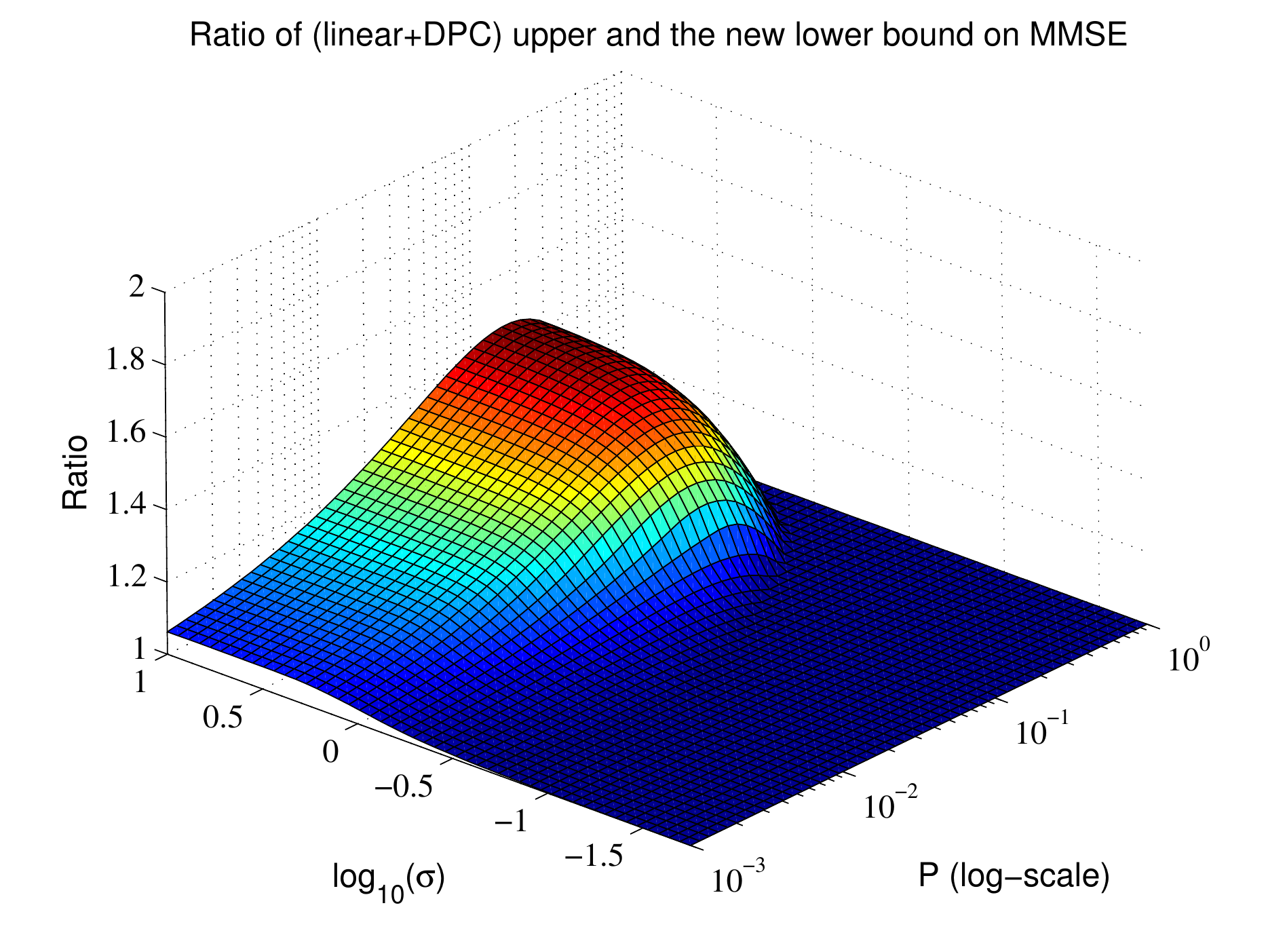}
\caption{Ratio of upper and lower bounds on $MMSE$ vs $P$ and $\sigma$ at $R=0$. Whereas the ratio diverges to infinity with the old lower bound of~\cite{WitsenhausenJournal} (top), it is bounded by $1.5$ for the new bound (bottom). This is a consequence of the improved tightness of the new bound at small $MMSE$.}
\label{fig:MMSEratio}
\end{center}
\end{figure}

\begin{figure}[htb]
\begin{center}
\includegraphics[scale=0.47]{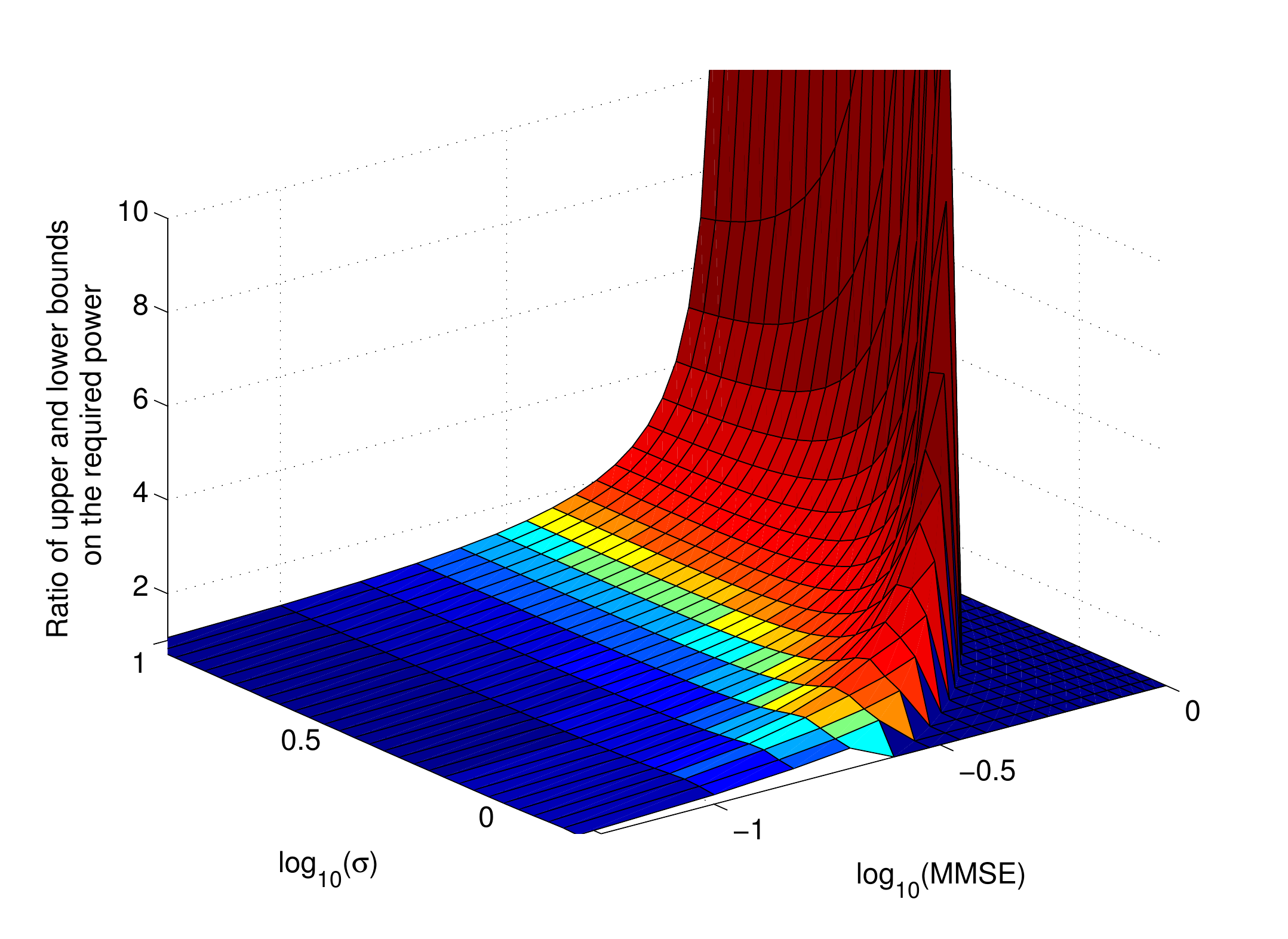}
\caption{Ratio of upper and lower bounds on $P$ vs $MMSE$ and $\sigma$ at $R=0$. Interestingly, the ratio increases to infinity as $\sigma\rightarrow\infty$ along the path where $P$ is close to zero (corresponding to ``high" $MMSE=\frac{\sigma^2}{\sigma^2+1}$).}
\label{fig:PowerRatio}
\end{center}
\end{figure}

\begin{figure}[htb]
\begin{center}
\includegraphics[scale=0.55]{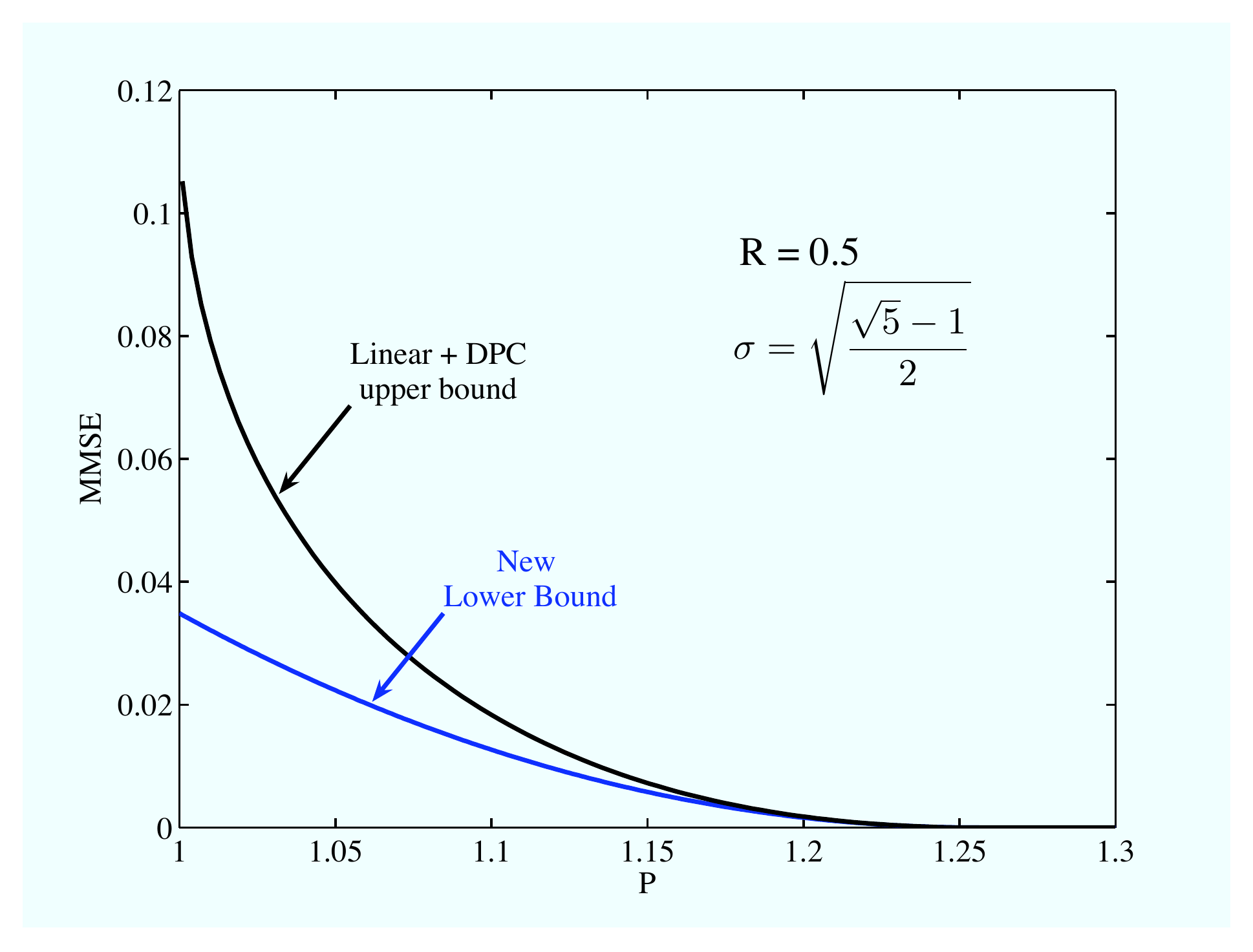}
\caption{Upper and lower bounds on $P$ vs $MMSE$ for $\sigma=\sqrt{\frac{\sqrt{5}-1}{2}}$\ for $R=0.5$. Though the bounds match at $MMSE=0$ (by Corollary~\ref{coro:match}), the bounds do not match at the minimum power ($P=1$ here) for nonzero rates. Below $P=1$, communication at $R=0.5$ is not possible.}
\label{fig:nonzerorate}
\end{center}
\end{figure}

\begin{figure}[htb]
\begin{center}
\includegraphics[scale=0.47]{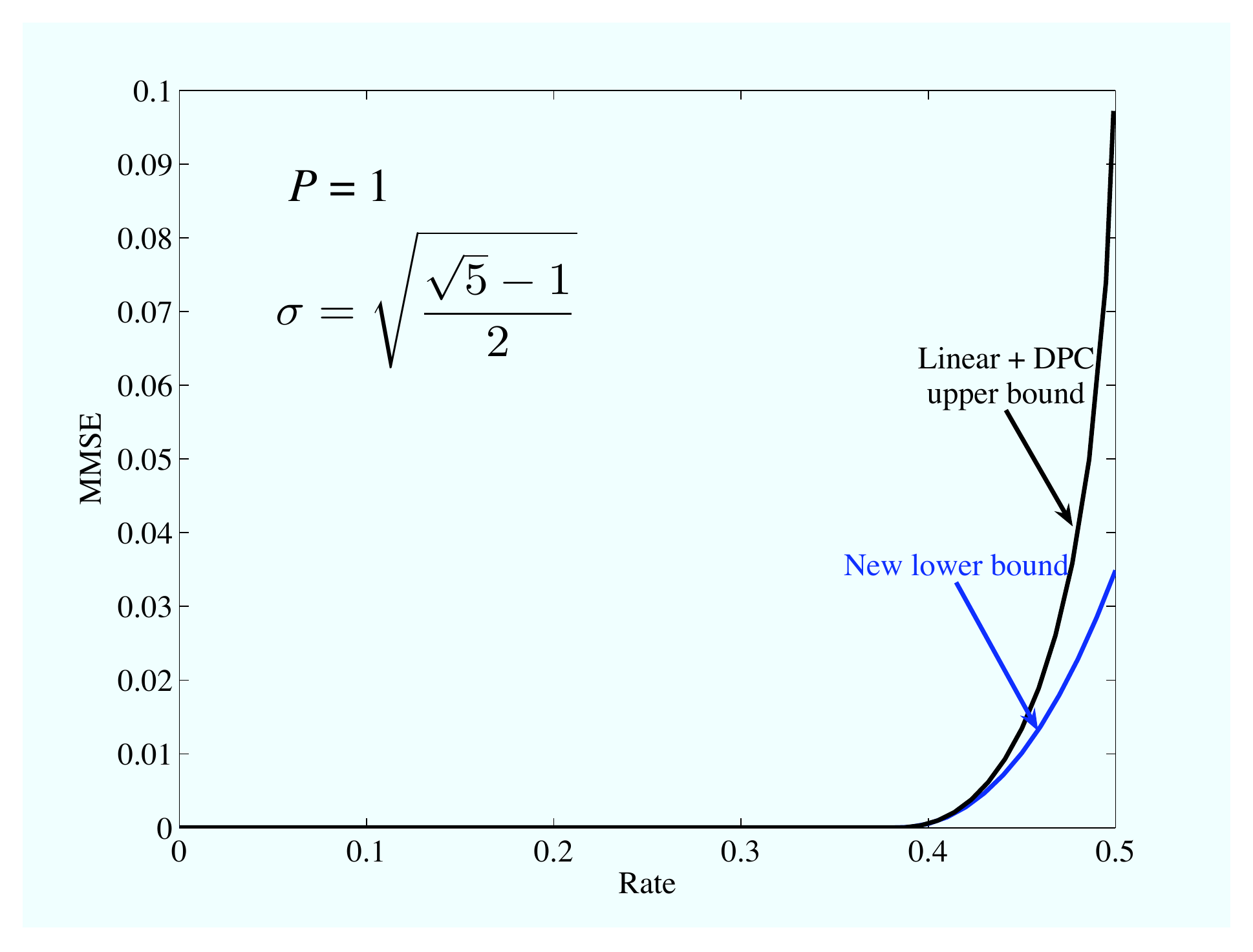}
\caption{Plot of upper and lower bounds on $MMSE$ vs rate for fixed power $P=1$ and $\sigma=\sqrt{\frac{\sqrt{5}-1}{2}}$. Higher rates require higher average distortion in the reconstruction of $\m{X}$.}
\label{fig:RvsMMSE}
\end{center}
\end{figure}

\vspace{-0.1in}

\section*{Acknowledgments}
P.~Grover and A.~Sahai acknowledge the support of the National Science Foundation (CNS-403427, CNS-093240, CCF-0917212 and CCF-729122) and Sumitomo Electric. A.~B.~Wagner acknowledges the support of NSF CSF-06-42925 (CAREER) grant. We thank Hari Palaiyanur, Se Yong Park and Gireeja Ranade for helpful discussions.

\bibliographystyle{IEEEtran}
\bibliography{IEEEabrv,MyMainBibliography,MyMainBibliography_2}

\end{document}